\DeclareRobustCommand{\qed}{%
  \ifmmode 
  \else \leavevmode\unskip\penalty9999 \hbox{}\nobreak\hfill
  \fi
\quad\hbox{\qedsymbol}}
\newcommand{\mathbold}[1]{\mbox{\boldmath $#1$}}
\newcommand{\pr}{\mathbold P}
\newcommand{\ex}{\mathbold E}
\newcommand{\argmin}{\operatorname{argmin}}
\newtheorem{theorem}{Theorem}[section]
\begin{document}
\begin{center}
{\large Stepwise Choice of Covariates in High Dimensional Regression}

\quad\\
Laurie Davies\\
{\it Faculty of Mathematics\\
University of Duisburg-Essen, 45117 Essen, Federal Republic of
Germany\\
e-mail:laurie.davies@uni-due.de}
\end{center}
\quad\\






\begin{abstract}
Given data ${\mathbold y}(n)$ and $q(n)$ covariates ${\mathbold x}(n)$
one problem in linear regression 
is to decide which if any of the covariates to include. There are many
articles on this problem but all are based on a stochastic model for the
data.  This paper gives what seems to be a new approach  which does
not require any form of model.  It is conceptually and algorithmically
simple. Instead of testing whether a regression parameter $\beta_{\nu}$
is zero it asks to what extent the corresponding covariate ${\mathbold
  x}_{\nu}$ is better than Gaussian noise as measured by the
probability of a greater reduction in the sum of squared residuals. An
exact expression for this probability is available and consistency
results can be proved under appropriate assumptions. The idea can be
extended to non-linear and robust regression.
\end{abstract}
Subject classification: 62J05\\
Key words: stepwise regression; high dimensions.



\section{Introduction} \label{sec:intro}
Most if not all approaches for choosing covariates in high
dimensional linear regression are based on the model 
\begin{equation} \label{equ:standard_model}
{\mathbold Y}(n)={\mathbold X}(n){\mathbold \beta}(n) +{\mathbold \varepsilon}(n)
\end{equation}
where just $k(n)$ of the $\beta_j(n), j=1,\ldots,q(n)$ are non-zero or
large, the remainder being zero or small. The following approach
differs in that it is not based on the model
(\ref{equ:standard_model}) or indeed any model. Whether a covariate is
included or not depends only on the degree to which it is better than
standard Gaussian white noise.

More precisely, suppose that at one stage of the stepwise procedure
a subset of the covariates of size $k_0\le n-2$ has been
included in the regression. Denote the indices of this subset by
${\mathcal S}_0$ and the mean sum of squared residuals by
$ss_0$. Now include the covariate ${\mathbold
  x}_{\nu}=(x_{i\nu})_{i=1}^n$ with $\nu \notin {\mathcal
  S}_0$ and denote the mean sum of squared residuals based on ${\mathcal
  S}_0\cup \{\nu\}$ by $ss_{\nu}$. Including the best of the
covariates not in ${\mathcal S}_0$ leads to a minimum mean sum of squared
residuals 
\[ ss_{01} =\min_{\nu\notin {\mathcal S}_0} ss_{\nu}.\]
The covariates not in ${\mathcal S}_0$ are now replaced in their
entirety by standard Gaussian white noise. Including the random
covariate corresponding to ${\mathbold x}_{\nu}$ leads to a random
mean sum of squared residuals $SS_{\nu}$ and including the best of the
random covariates to a minimum mean sum of squared residuals
\[ SS_{01}=\min_{\nu\notin {\mathcal S}_0} SS_{\nu}.\]
The probability that the random covariates are better than the actual
ones given ${\mathcal S}_0$ is
\begin{eqnarray*}
\pr(SS_{01} <ss_{01})&=& 1-\pr(SS_{01}  \ge ss_{01})=1-\pr(\min_{j  \notin
  {\mathcal S}_0} SS_j \ge ss_{01})\\
&=&1-\prod_{j \notin {\mathcal S}_0}\pr(SS_j\ge ss_{01})\\
\end{eqnarray*}
It can be shown that
\begin{equation} \label{equ:beta_p}
SS_j\stackrel{D}{=} ss_0(1-B_{1/2,(n-\nu_0-1)/2})
\end{equation}
where $B_{a,b}$ denotes a beta random variable with parameters $a$ and
$b$. This was a
personal communication from Lutz D\"umbgen, his proof is given in the
appendix. It replaces the approximation based on the chi-squared
distribution which was used in earlier versions of this paper. Thus
\[\pr(SS_j\ge ss_{01})=\text{pbeta}(1-ss_{01}/ss_0,1/2,(n-\nu_0-1)/2)\]
so that finally
\begin{equation} \label{equ:pval_1_exact}
 \pr(SS_{01}\le ss_{01})=
 1-\text{pbeta}(1-ss_{01}/ss_0,1/2,(n-\nu_0-1)/2)^{q(n)-\nu_0}.
\end{equation}
where $\text{pbeta}(\cdot,a,b)$ denotes the distribution function of $B_{a,b}$
This is the $p$-value for the inclusion of the next covariate.  
It is worth noting that (\ref{equ:pval_1_exact}) takes into account
the number  $\nu_0$ of covariates already active  whose influence is
small and the total number of covariates $q(n)-\nu_0$ not yet included
whose influence is large for large $q(n)$. 
 
If $\nu$ covariates have been included with $p$-values
$p_1,\ldots,p_{\nu}$ then the probability that each and every one  of
them is better than Gaussian noise is
\begin{equation} \label{equ:pval_2}
\prod_{j=1}^{\nu}(1-p_j) \,.
\end{equation}
This is because at each stage independent Gaussian covariates are
used.

One proposal is simply to calculate the $p$-values
(\ref{equ:pval_1_exact}) until the first one that exceeds a given
$\alpha$, say $\alpha=0.01$, and then include all previous
covariates. This leads to the stopping rule
\begin{equation} \label{equ:stop_rule_1}
ss_{01}> ss_0\left(1-\text{qbeta}((1-\alpha)^{1/(q(n)-\nu_0)},1/2,(n-\nu_0-1)/2)\right)
\end{equation}
where $\text{qbeta}(\cdot,a,b)$ is the quantile function of the beta
distribution with parameters $a$ and $b$.  The asymptotic version for
large $n$ and $q(n)$ is
\begin{eqnarray}
\lefteqn{ss_{01}>}\label{equ:asymp_p}\\
&&ss_0\Big(1-(2\log
q(n)-\log\log(q(n))-2\log(-\log(1-\alpha)))/n\Big).\nonumber
\end{eqnarray}
A proof is given in the appendix.

The calculation of the $p$-values (\ref{equ:pval_1_exact}) does not require
the choice of a regularization parameter. There is no need for
cross-validation or indeed any form of simulation. Furthermore, as the
procedure is not based on the linear model (\ref{equ:standard_model})
it does not require an estimate for the error variance $\sigma^2$. The
method is invariant with respect to affine changes of 
units and to permutations of the covariates. There are no problems of
multiple testing as this is covered by (\ref{equ:pval_2}). The method
can be extended to robust regression, to non-linear approximations of
the form $g({\mathbold   x}^T{\mathbold \beta})$ if $g$ has a Taylor
expansion and to the Kullback-Leibler discrepancy where this is
appropriate. In these extensions there is no exact form for the
$p$-values corresponding to (\ref{equ:pval_1_exact}) but there exist
simple approximations based on the chi-squared distribution.


This paper is based on \cite{DAV16a,DAV16b}. 
 The
paper \cite{DAV16a} goes beyond the choice of covariates and considers
non-significance regions in place of confidence regions for the values
of the parameter ${\mathbold \beta}$. This will not be considered here.

Stepwise regression is treated in Section~2; linear least squares
regression in Section~2.1, $M$-regression in Section~2.2 and
non-linear approximation in Section~2.3. A consistency result for
least squares linear regression is proved in Section~3.1 and the false
discovery rate is considered in Section~3.2. Results of simulations 
following \cite{DEGBUERITDEZ14} and \cite{JIAROH15} are given in
Section~4.1 and results for in Section~4.2. The proofs of (\ref{equ:beta_p})
and (\ref{equ:asymp_p}) are given in an appendix.

\section{Stepwise regression}
\subsection{Least squares regression} \label{sec:2.1}

As an example we take the leukemia data (\cite{GOLETAL99}
\begin{center}
http://www-genome.wi.mit.edu/cancer/\\
\end{center}
which was analysed in \cite{DETBUH03}. The number of patients is $n=72$
with $q(n)=3571$ covariates. The dependent variable takes on only the
values 0 and 1 depending on whether the patient suffers from acute
lymphoblastic leukemia or acute myeloid leukemia. The first five
genes in order of inclusion with their associated $p$-values as
defined by (\ref{equ:pval_1_exact}) are as follows:
\begin{equation} \label{equ:leukemia_lsq_proc1}
\begin{tabular}{cccccc}
gene number&1182&1219& 2888& 1946&2102\\
$p$-value&0.0000&8.58e-4&3.58e-3&2.54e-1&1.48e-1\\
\end{tabular}
\end{equation}

 According to this relevant genes are 1182, 1219 and 2888. Given these
 three genes the remaining 3568 are individually no better than random noise with
 respect to the two forms of leukemia. In particular gene 1946 is no
 better than random Gaussian noise.

 A linear regression based on the genes 1182, 1219, 2888 and
 1946 results in the $p$-values 0.0000, 9.84e-8, 5.74e-7 and
 8.20e-5 respectively. The difference between the $p$-value  8.20e-5 and the
 $p$-value 0.0254 for gene 1946 is that the latter takes into account
 that the gene 1946 is the best of 3568 genes and the former does not.

 The time required was 0.1 seconds using a Fortran 77
 programme linked to R. The source code for beta distribution function
 was a double precision version of that given in  \cite{PRTEVEFL03}. 

Interest in the above data centred on classifying patients based on
their gene expression data. The $p$-values suggest that for this it is
only necessary to base the classification on the genes 1182, 1219 and
2888. A simple linear regression based on genes 1182,1219 and 2888
results in one misclassification.

If the fourth covariate 1946 is also included when classifying the patients
there are no misclassifications. However if all the covariates except
1182, 1219 and 2888 are replaced by Gaussian white noise then in about
7\% of the cases including the fourth covariate results in no
misclassifications. Including the fourth and fifth covariates results
in no misclassifications in about 60\% of the cases. This is in spite
of the fact that  additional covariates are no more than Gaussian noise.

The above does not imply that 1182, 1219 and 2888 are the only relevant
genes. To see this remove these genes and repeat the analysis. This
leads to 
\begin{equation} \label{equ:leukemia_lsq-1182}
\begin{tabular}{ccccc}
gene number&1652&979&657&2260\\
$p$-value&0.0000&9.36e-5&2.75e-2&2.22e-2.\\
\end{tabular}
\end{equation}
The genes 1652, 979, 657 and 2260 can now be removed and the process
continued. If this is done it  results in 281 genes which are possibly
relevant.

\subsection{$M$-regression}
The method can  be applied to $L_1$ regression but with the
disadvantage that there does not exist a simple expression
corresponding to (\ref{equ:pval_1_exact}). If there is a particular
interest in $L_1$ regression simulations will be required. If however
$L_1$ regression is only used as a protection against outlying
$y$-values this can be provided by $M$-regression for which an
approximate version of (\ref{equ:pval_1_exact}) involving the
chi-squared distribution is available.

Let $\rho$ by a symmetric, positive and twice differentiable convex
function with $\rho(0)=0$.  The default function will be the Huber's
$\rho$-function with a tuning constant $c$ (\cite{HUBRON09}, page 69)
defined by
\begin{equation}
\rho_{c}(u)=\left\{\begin{array}{ll}
\frac{u^2}{2}, &\vert u\vert \le c,\\
c\vert u\vert -\frac{c^2}{2},&\vert u \vert > c.\\
\end{array}
\right.
\end{equation}
The default value of $c$ will be $c=1$.

For a given subset ${\mathcal S}_0$ of size $\nu_0$ the sum of squared
residuals is replaced by
\begin{equation}  \label{equ:min_rho}
s_0(\rho,\sigma)=\min_{{\mathbold
    \beta}({\mathcal S}_0)}\,\frac{1}{n}\sum_{i=1}^n 
\rho\left(\frac{y_i-\sum_{j \in{\mathcal S}_0}x_{ij}\beta_j({\mathcal
      S}_0)}{\sigma}\right).
\end{equation}
which can be calculated using the algorithm described in {\bf 7.8.2} of
\cite{HUBRON09}. The minimizing $\beta_j({\mathcal S}_0)$ will de
denoted by ${\tilde \beta}_j({\mathcal S}_0)$. A proposal for the
choice of $\sigma$  is given below.

For some $\nu \notin {\mathcal S}_0$  put 
\begin{equation}  \label{equ:min_rho_j}
s_{\nu}(\rho,\sigma)=\min_{{\mathbold
    \beta}({\mathcal S}_0\cup \{\nu\})}\,\frac{1}{n}\sum_{i=1}^n 
\rho\left(\frac{y_i-\sum_{j \in{\mathcal S}_0\cup\{\nu\}}x_{ij}\beta_j({\mathcal
      S}_0\cup\{\nu\})}{\sigma}\right).
\end{equation}
and
\begin{equation}
s_{01}(\rho,\sigma) =\min_{\nu \notin {\mathcal S}_0}s_{\nu}(\rho,\sigma).
\end{equation}

 Replace all the covariates not in ${\mathcal S}_0$ by standard
 Gaussian white noise, include the $\nu$th random covariate
 denoted by $Z_i$ and put
\begin{equation} 
S_{\nu}(\rho,\sigma)=\min_{{\mathbold \beta}({\mathcal S}_0),b}\,\frac{1}{n}\sum_{i=1}^n  
\rho\left(\frac{y_i-\sum_{j \in{\mathcal S}_0}x_{ij}\beta_j({\mathcal 
      S}_0)-bZ_i}{\sigma}\right).
\end{equation}
A Taylor expansion gives
\begin{eqnarray}
S_{\nu}(\rho,\sigma)&\approx&\frac{1}{2}\frac{\left(\sum_{i=1}^n\rho^{(1)}\left(\frac{r_i}{\sigma}\right)Z_i\right)^2}{\sum_{i=1}^n\rho^{(2)}\left(\frac{r_i}{\sigma}\right)Z_i^2}\nonumber\\
&\approx&s_0(\rho,\sigma)-\frac{1}{2}\frac{\left(\sum_{i=1}^n\rho^{(1)}\left(\frac{r_i}{\sigma}\right)\right)^2}{\sum_{i=1}^n\rho^{(2)}\left(\frac{r_i}{\sigma}\right)}\chi^2_1
\end{eqnarray}
with $r_i=y_i-\sum_{j \in{\mathcal S}_0}x_{ij}{\tilde \beta}_j({\mathcal S}_0)$.
This leads to the asymptotic $p$-value  
\begin{equation} \label{equ:pval_m}
1-\text{pchisq}\left(
  \frac{2s_0(\rho^{(2)},\sigma)}{s_0(\rho^{(1)},\sigma)}
  (s_0(\rho,\sigma)-s_{01}(\rho,\sigma))\right)^{q(n)-\nu_0}.   
\end{equation}
corresponding to the exact $p$-value (\ref{equ:pval_1_exact}) for linear
regression. Here 
\[ s_0(\rho^{(1)},\sigma)= \frac{1}{n}\sum_{i=1}^n 
\rho^{(1)}\left(\frac{r_i}{\sigma}\right)^2 , \quad
s_0(\rho^{(2)},\sigma)= \sum_{i=1}^n  
\rho^{(2)}\left(\frac{r_i}{\sigma}\right).\]

It remains to specify the choice of scale $\sigma$. The initial value
of $\sigma$  is the median absolute deviation of ${\mathbold y}(n)$
multiplied by the Fisher consistency factor 1.4826. If the ${\mathbold
  y}(n)$ have a large atom more care is needed. Let $n_{\Delta}$ be
the size of the   largest atom. Instead of the median of the absolute
deviations from   the median we now take the $0.5(n+n_{\Delta})/n$
quantile (see Chapter 4 of \cite{DAV14}). Let $\sigma_0$ denote the scale
at some point of the procedure with $\nu_0$ covariates already
included. After the next covariate has been 
included the new scale $\sigma_1$ is taken to be
\begin{equation} \label{equ:sig_rob_reg}
\sigma_1^2=\frac{1}{(n-\nu_0-1)c_f}\sum_{i=1}^n \psi(r_1(i)/\sigma_0)^2
\end{equation}
where the $r_1(i)$ are the residuals based on the $\nu_0+1$ covariates
and $c_f$ is the Fisher consistency factor given by
\[ c_f=\ex(\psi(Z)^2)\]
where $Z$ is $N(0,1)$ (see \cite{HUBRON09}). Other choices are also
possible.

\subsection{Non-linear approximation} \label{sec:non_lin}
The dependent variable ${\mathbold y}(n)$ is now approximated by $g({\mathbold
  x}({\mathcal S})^T{\mathbold \beta}({\mathcal S}))$ for some smooth
function $g$. Consider a subset ${\mathcal S}_0$, write
\begin{equation}
ss_0= \min_{{\mathbold \beta}({\mathcal S}_0)}\,\frac{1}{n}\sum_{i=1}^n (y_i-g({\mathbold
  x_i}({\mathcal S}_0)^T{\mathbold \beta}({\mathcal S}_0)))^2.
\end{equation}
and denote the minimizing ${\mathbold \beta}({\mathcal S}_0)$ by
${\tilde {\mathbold \beta}}({\mathcal S}_0)$. 
Now include one additional covariate ${\mathbold x}_{\nu}$ with $\nu \notin {\mathcal
  S}_0$  to give ${\mathcal S}_1={\mathcal S}_0\cup\{\nu\}$, denote
the mean sum of squared residuals by $ss_{\nu}$ and the minimum over
all possible choice of $\nu \notin {\mathcal S}_0$ by $ss_1$. As before all
covariates not in ${\mathcal S}_0$ are replaced by standard Gaussian
white noise. Include the $\nu$th random covariate denoted by $Z_i$, put
\[SS_{\nu}=\min_{\beta({\mathcal S}_0),b}\frac{1}{n}\sum_{i=1}^n (y_i-g({\mathbold
  x_i}({\mathcal S}_0)^T{\mathbold \beta}({\mathcal S}_0)+bZ_i))^2
\]
and denote the minimum over all possible choice of $\nu \notin
{\mathcal S}_0$ by $SS_1$.

Arguing as in the last section for robust regression results in 
\begin{equation} \label{equ:lsq_non_lin_1}
SS_1\approx ss_0- \frac{\sum_{i=1}^n r_i({\mathcal
     S}_0)^2g^{(1)}({\mathbold x}_i({\mathcal S}_0)^T{\tilde {\mathbold
     \beta}}({\mathcal S}_0))^2}{\sum_{i=1}^ng^{(1)}({\mathbold
     x}_i({\mathcal S}_0)^T{\tilde {\mathbold 
     \beta}}({\mathcal S}_0))^2}\chi^2_1
\end{equation}
where 
\begin{equation}\label{equ:lsq_non_lin_2}
r_i({\mathcal S}_0)=y_i-g({\mathbold
  x_i}({\mathcal S}_0)^T{\tilde {\mathbold \beta}}({\mathcal S}_0)).
\end{equation}
The asymptotic $p$-value corresponding to the asymptotic $p$-value
(\ref{equ:pval_m}) for $M$-regression is 
\begin{equation} \label{equ:pval_non_lin}
1-\text{pchisq}\left(\frac{(ss_0-ss_1)\sum_{i=1}^ng^{(1)} ({\mathbold
      x}_i({\mathcal S}_0)^T{\tilde {\mathbold \beta}}({\mathcal
      S}_0))^2} {\sum_{i=1}^n r_i({\mathcal
     S}_0)^2g^{(1)}({\mathbold x}_i({\mathcal S}_0)^T{\tilde {\mathbold
     \beta}}({\mathcal S}_0))^2},1\right)^{q(n)-\nu_0}.
\end{equation}

If $g$ is the logistic function $g(u)=\exp(u)/(1+\exp(u))$ 
\begin{eqnarray}
\lefteqn{\hspace{-2cm}\frac{\sum_{i=1}^n r_i({\mathcal
     S}_0)^2g^{(1)}({\mathbold x}_i({\mathcal S}_0)^T{\mathbold
     \beta}({\mathcal S}_0))^2}{\sum_{i=1}^ng^{(1)}({\mathbold
     x}_i({\mathcal S}_0)^T{\mathbold  \beta}({\mathcal
     S}_0))^2}}\nonumber\\
\hspace{2cm}&=&\frac{\sum_{i=1}^n
   (y_i-p_i(0))^2p_i(0)^2(1-p_i(0))^2}{\sum_{i=1}^np_i(0)^2(1-p_i(0))^2} 
\label{equ:logistic}
\end{eqnarray}
where 
\[p_i(0)=\frac{\exp({\mathbold x}_i({\mathcal S}_0)^T{\mathbold
     \beta}({\mathcal S}_0))}{1+\exp({\mathbold x}_i({\mathcal S}_0)^T{\mathbold
     \beta}({\mathcal S}_0))}.\]
The opportunity is now taken to correct an error in
\cite{DAV14}. The term  
\[\frac{\sum_{i=1}^np_i^3(1-p_i)^3}{\sum_{i=1}^np_i^2(1-p_i)^2}\]
occurs repeatedly in  Chapter~11.6.1.2 and should be replaced by
\[\frac{\sum_{i=1}^n(y_i-p_i)^2p_i^2(1-p_i)^2}{\sum_{i=1}^np_i^2(1-p_i)^2}\]
agreeing with (\ref{equ:logistic}). 

Robust non-linear regression can be treated in the same manner but the
expressions become unwieldy.

\subsection{Kullback-Leibler and logistic regression} \label{sec:kul_leib}
In some data sets, for example the leukemia data of Section~2.1, the
dependent variable ${\mathbold y}(n)$ takes on only the values zero and
one. For such data least squares combined with the logistic model can
cause problems: it can happen that for some $i$ the estimated
probability is $p_i\ge 1-10^{-10}$ although $y_i=0$. An example is
provided by the colon cancer data (\cite{ALOETAL99},
http://microarray.princeton. edu/oncology/.). The sample size is
$n=62$ and there are 2000 covariates. The logistic model based on
least squares with a cut-off $p$-value of 0.01 results in two
covariates. If these two covariates are used to classify the cancer
there are three errors. In two cases the probability based on the
logistic model is one whereas the dependent variable has the value
zero. In the third case the values are zero and one respectively.

The problem can be avoided by using the Kullback-Leibler discrepancy 
\begin{equation} \label{equ:kul_leib_1}
kl({\mathbold y}(n),{\mathbold
  x}(n),\mathbold{\beta})=-\sum_{i=1}^n(y_i\log p({\mathbold
  x}_i,{\mathbold \beta})+(1-y_i)\log(1- p({\mathbold
  x}_i,{\mathbold \beta})))
\end{equation}
where
\[p({\mathbold x}_i,{\mathbold \beta})=\frac{\exp({\mathbold
    x}_i^t{\mathbold \beta})}{1+\exp({\mathbold
    x}_i^t{\mathbold \beta})}.
\]

The arguments of the previous two sections lead to the  asymptotic $p$-values
\begin{equation} \label{equ:asym_p_kl}
1-\text{pchisq}\left(\frac{2\sum_{i=1}^n p_i(0)(1-p_i(0))}{\sum_{i=1}^n(y_i-p_i(0))^2}(kl_0-kl_1)\right)^{q(n)-p(0)}.
\end{equation}
where $kl_0$ is the minimum  Kullback-Leibler discrepancy based on a
subset ${\mathcal S}_0$  and  $kl_1$ the minimum value through the
inclusion of one additional covariate. The $p_i(0)$ are the values of
$p({\mathbold x}_i,{\mathbold \beta})$ giving the minimum $kl_0$.


Repeating the least squares analysis for the colon data but now
using the Kullback-Leibler divergence results in the single gene
number 377. The number of misclassifications based on this gene is
nine. The second gene 356 has a $p$-value of 0.053. If this is included
the number of misclassifications is reduced to two but the status of
this second gene is not clear.

\section{Consistency and false discovery rate}
\subsection{Consistency: $q(n)>k(n)$}
To prove a consistency result a model with error term is necessary. It
is defined as follows. There are $p$ covariates ${\mathbold
  x}_1,\ldots,{\mathbold x}_p$ and the response ${\mathbold Y}$ is
given by the first $k$
\begin{equation}
{\mathbold Y}=\sum_{j=1}^k {\mathbold x}_j\beta_j+{\mathbold \varepsilon}
\end{equation}
where the  ${\mathbold \varepsilon}$ are i.i.d.
Gaussian random variables with zero mean and variance $\sigma^2$.  Given a
sample of size $n$  both $k=k(n) =o(n)$ and $q=q(n)$ will be allowed to
depend on $n$. Without loss of generality the covariates ${\mathbold
  x}_j(n)$ will be standardized to have $L_2$ norm $\sqrt{n}$, that is $\Vert
{\mathbold x}_j(n)\Vert_2^2=n$. 

A given subset of $\{1,\ldots,k(n)\}$
will be denoted by ${\mathcal  S}_0(n)$. For $\nu \notin
{\mathcal S}_0(n)$  the subset ${\mathcal S}_0(n)\cup\{\nu\}$ will be
denoted by ${\mathcal S}_{01}(n)$, the complements of ${\mathcal
  S}_0(n)$ and ${\mathcal S}_{01}(n)$ with respect to
$\{1,\ldots,k(n)\}$ by ${\mathcal S}_{12}(n)$ and ${\mathcal S}_2(n)$
respectively. The projection onto the linear space spanned by the
${\mathbold x}_j(n),\, j \in {\mathcal S}_0(n),$ will be denoted by
$P_0$ with the corresponding notation for ${\mathcal S}_{01}(n)$,
${\mathcal S}_{12}(n)$ and ${\mathcal S}_2(n)$. Finally put 
\[{\mathbold \mu}_0(n)=\sum_{j \in {\mathcal S}_0(n)} {\mathbold
  x}_j(n)\beta_j(n)\]
with the corresponding definitions of ${\mathbold \mu}_{01}(n)$,
${\mathbold \mu}_{12}(n)$ and ${\mathbold \mu}_2(n)$

Let $ss_0(n)$ and $ss_{01}(n)$ be the sum of squared residuals after
regressing ${\mathbold Y}(n)$ on the covariates ${\mathbold x}_j(n)$
for $j$ in ${\mathcal S}_0(n)$ and ${\mathcal S}_{01}(n)$
respectively. From $k_n=o(n)$ and the Gaussian assumption on the
errors (which can be relaxed) it follows that the asymptotic values of
$ss_0(n)$ and $ss_{01}(n)$ are given by
\begin{eqnarray}
ssa_0(n)&=&{\mathbold \mu}_{12}(n)^t({\mathbold
\mu}_{12}(n)-P_0({\mathbold \mu}_{12}(n))+n\sigma^2\label{equ:asymp_1}\\
ssa_{01}(n)&=& {\mathbold \mu}_2(n)^t({\mathbold
\mu}_2(n)-P_{01}({\mathbold \mu}_2(n))+n\sigma^2\label{equ:asymp_2}
\end{eqnarray}
respectively. 
Given $n$ the  covariates are specified sequentially up to but excluding the
first covariate whose $p$-values exceeds $\alpha(n)$ for some
sequence $\alpha(n)$ tending to zero but such that
\begin{equation} \label{equ:asymp_p_alpha}
-\log(-\log(1-\alpha(n))))=o(\log(q(n))).
\end{equation} 

\begin{theorem} \label{theorem1}
Suppose that there exists a  $\tau>2$ and a $\tau'<2$ such that for all $n$
sufficiently large the following holds:\\
(i) for each proper subset ${\mathbold S}_0(n)$ of $\{1,\ldots,k(n)\}$
there exists a $\nu \in {\mathcal S}_{12}(n)$  such that
\begin{equation} \label{equ:asymp_3}
\left( 1-\frac{ssa_{01}(n)}{ssa_0(n)}\right) > \frac{\tau \log
  q(n)}{n},\\
\end{equation}
(ii) for all subsets ${\mathbold S}_0(n)$ of $\{1,\ldots,k(n)\}$
and for all $ \nu >k(n)$ 
\begin{equation} \label{equ:asymp_4}
\left( 1-\frac{ssa_{01}(n)}{ssa_0(n)}\right) <\frac{\tau' \log
  q(n)}{n}.\\
\end{equation}
Then the procedure described above is consistent.
\end{theorem}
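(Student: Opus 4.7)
I aim to show that with probability tending to one the procedure selects exactly $\{1,\ldots,k(n)\}$. I would decompose this into two events: (E1) at every stage at which the current active set $\mathcal{S}_0$ is a proper subset of $\{1,\ldots,k(n)\}$, the next included covariate lies in $\mathcal{S}_{12}(n)$ and the $p$-value (\ref{equ:pval_1_exact}) does not exceed $\alpha(n)$; and (E2) once $\mathcal{S}_0\supseteq\{1,\ldots,k(n)\}$, the $p$-value of the best remaining covariate exceeds $\alpha(n)$. If both events hold with probability tending to one the procedure is consistent.

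Two preparatory reductions come first. First, I translate the stopping rule: by (\ref{equ:asymp_p}) together with the calibration (\ref{equ:asymp_p_alpha}), the critical threshold on $1-ss_{01}/ss_0$ is $(2\log q(n)/n)(1+o(1))$. Second, a concentration step replaces the random residual sums by their asymptotic values (\ref{equ:asymp_1})--(\ref{equ:asymp_2}): using $k(n)=o(n)$ and Gaussianity of $\varepsilon$, standard Gaussian concentration gives $ss_0=ssa_0(1+o_p(1))$ uniformly in the subsets that the procedure can visit. The key algebraic identity, obtained because $P_{01}-P_0$ is rank one, is
\[
ss_0-ss_{01}\;=\;(c_\nu+g_\nu)^2/\|\tilde x_\nu\|^2,
\]
with $\tilde x_\nu=(I-P_0)x_\nu$, $c_\nu=\tilde x_\nu^T\mu_{12}(n)$ deterministic, and $g_\nu=\tilde x_\nu^T\varepsilon\sim N(0,\sigma^2\|\tilde x_\nu\|^2)$. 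A direct calculation, treating separately the cases $\nu\in\mathcal{S}_{12}$ and $\nu>k(n)$, yields $ssa_0-ssa_{01}=c_\nu^2/\|\tilde x_\nu\|^2$ in both cases, so assumptions (i) and (ii) are precisely lower and upper bounds on the deterministic piece of the reduction.

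For (E1), assumption (i) supplies some $\nu^*\in\mathcal{S}_{12}$ with $|c_{\nu^*}|/\|\tilde x_{\nu^*}\|\ge\sigma\sqrt{\tau\log q(n)}(1-o(1))$. Since $g_{\nu^*}/\|\tilde x_{\nu^*}\|$ is a single $N(0,\sigma^2)$ variable and hence $O_p(1)$, the reduction at $\nu^*$ is at least $(\tau-o_p(1))\sigma^2\log q(n)$, strictly above the threshold $2\sigma^2\log q(n)$ since $\tau>2$, so the $p$-value at that stage lies below $\alpha(n)$. For each noise covariate $\nu>k(n)$, assumption (ii) caps $|c_\nu|/\|\tilde x_\nu\|$ by $\sigma\sqrt{\tau'\log q(n)}$, and a Gaussian maximal inequality gives $\max_{\nu>k(n)}|g_\nu|/\|\tilde x_\nu\|\le\sigma\sqrt{2\log q(n)}(1+o(1))$ with probability tending to one; combining these with $\tau'<2$ and $\tau>2$ one checks that the largest noise reduction is strictly below the reduction at $\nu^*$, so the $\argmin$ inside the procedure selects a true covariate. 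Event (E2) follows because once $\mathcal{S}_0\supseteq\{1,\ldots,k(n)\}$ one has $\mu_{12}(n)=0$, all $c_\nu$ vanish, and the formula (\ref{equ:pval_1_exact}) is then exactly calibrated to Gaussian noise, so the probability of one further false inclusion is $\alpha(n)+o(1)\to 0$.

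The hard part will be the sharp quantitative interplay between the signal and noise constants. Uniformity itself is relatively cheap: at each of the at most $k(n)$ selection steps one conditions on the correct current subset, so a union bound over $k(n)$ steps is harmless under $k(n)=o(n)$. The delicate point is the sharp constant $2$ in the threshold: to show that the largest of the noise reductions $(c_\nu+g_\nu)^2/\|\tilde x_\nu\|^2$ over $\nu>k(n)$ is beaten by the true reduction, the cross term $2c_\nu g_\nu/\|\tilde x_\nu\|^2$ must be handled by a Gaussian maximal inequality, and it is precisely the strict inequalities $\tau>2$ and $\tau'<2$ that provide the margin needed.
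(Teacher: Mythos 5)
Your skeleton is the same as the paper's: translate the stopping rule (\ref{equ:stop_rule_1}) via (\ref{equ:asymp_p}) and (\ref{equ:asymp_p_alpha}) into the threshold $2\log q(n)/n\,(1+o(1))$, replace $ss_0,ss_{01}$ by $ssa_0,ssa_{01}$, and then read off condition (i) to keep the procedure running through $\{1,\ldots,k(n)\}$ and condition (ii) to exclude $\nu>k(n)$. The paper's own proof is only those few sentences, resting entirely on the one-line claim $n(1-ss_{01}/ss_0)\asymp n(1-ssa_{01}/ssa_0)$; your rank-one decomposition $ss_0-ss_{01}=(c_\nu+g_\nu)^2/\Vert\tilde x_\nu\Vert^2$ with $c_\nu$ deterministic and $g_\nu$ Gaussian is exactly the right way to make that claim inspectable, and your observation that $ssa_0-ssa_{01}=c_\nu^2/\Vert\tilde x_\nu\Vert^2$ correctly identifies (i) and (ii) as bounds on the deterministic part only.

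However, the step you yourself flag as "the hard part" does not go through as claimed, and this is a genuine gap. For a noise covariate $\nu>k(n)$ the stochastic term is not negligible relative to the threshold: the Gaussian maximal inequality gives $\max_{\nu}|g_\nu|/\Vert\tilde x_\nu\Vert\approx\sigma\sqrt{2\log q(n)}$, which is of the \emph{same order} as the square root of the cutoff. Writing $s=ssa_0(n)/n\ge\sigma^2$, condition (ii) caps $|c_\nu|/\Vert\tilde x_\nu\Vert$ by $\sqrt{\tau' s\log q(n)}$, so the largest noise reduction can be as large as $\bigl(\sqrt{\tau' s}+\sqrt{2}\sigma\bigr)^2\log q(n)$, while the true covariate guaranteed by (i) yields only $\tau s\log q(n)(1-o(1))$ and the stopping threshold is $2s\log q(n)(1+o(1))$. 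For $s$ near $\sigma^2$ and $\tau'$ near $2$ one gets $(\sqrt{\tau'}+\sqrt{2})^2\approx 8>\tau$, so neither "noise below threshold" nor "noise beaten by signal" follows from $\tau>2$, $\tau'<2$ alone; the cross term $2c_\nu g_\nu/\Vert\tilde x_\nu\Vert^2$ is only controlled if $c_\nu$ is $o(\sqrt{s\log q(n)})$, i.e.\ essentially $\tau'$ near $0$. Your argument is airtight only in the two regimes where $c_\nu=0$ exactly (in particular for event (E2), where $\mu_{12}=0$, which is fine). To be fair, the paper's own proof does not address this either --- its $\asymp$ claim is false for individual noise covariates, since $n(1-ssa_{01}/ssa_0)$ can be $0$ while $n(1-ss_{01}/ss_0)$ for the best noise covariate is of order $2\log q(n)$ --- so your more explicit treatment has surfaced a weakness of the theorem's hypotheses rather than introduced a new error; but as written your claim that the strict inequalities $\tau>2$, $\tau'<2$ "provide the margin needed" is not substantiated and is not correct in general. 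A secondary, smaller point: controlling $g_{\nu^*}$ uniformly over the at most $k(n)$ stages requires $\log k(n)=o(\log q(n))$ (or a conditioning argument), which neither you nor the paper states explicitly.
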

\begin{proof} For large $n$ 
\[n\left(1-\frac{ss_{01}(n)}{ss_0(n)}\right)\asymp
n\left(1-\frac{ssa_{01}(n)}{ssa_0(n)}\right).\]
It follows from (\ref{equ:stop_rule_1}), (\ref{equ:asymp_p}), the choice
of the $\alpha(n)$ and (i) of the theorem that if
covariates ${\mathbold x}_j$ with $j\le
k(n)$ have been included then either all covariates ${\mathbold x}_j(n),\, j\le k(n)$ have
been included or there exists at least one covariate ${\mathbold x}_{\nu}(n),\,\nu
\le k(n)$ which is a candidate for inclusion. From
 (\ref{equ:stop_rule_1}), (\ref{equ:asymp_p}), the choice of
 $\alpha(n)$  and (ii) it follows that there is no covariate
 ${\mathbold x}_{\nu}(n)$ with $\nu >k(n)$ which is a 
candidate for inclusion. The procedure therefore continues until all
covariates ${\mathbold x}_j(n),\, j\le k(n)$ are included and then
terminates.
\end{proof}

It is perhaps worth emphasizing that the theorem does not require that
the covariates be uncorrelated. To take a concrete example put
$q(n)=2$ so that
\[{\mathbold Y}(n)={\mathbold x}_1(n)\beta_1(n)+{\mathbold
  x}_2(n)\beta_2(n) +{\mathbold \varepsilon}(n)\]
 where $\Vert {\mathbold x}_1(n)\Vert_2=\Vert {\mathbold
   x}_2(n)\Vert_2$ and the correlation between ${\mathbold x}_1(n)$
 and ${\mathbold x}_2(n)$ is $\rho(n)$. Suppose that $\beta_1(n)^2 >
 \beta_2(n)^2$ so that ${\mathbold x}_1(n)$ is the first candidate for
 inclusion. Then for large $n$ ${\mathbold x}_1(n)$ will be included
if
\[(\beta_1(n)+\rho\beta_2(n))^2 \ge \frac{\gamma\cdot
  \text{qchisq}(\sqrt{1-\alpha(n)},1)}{n}(\sigma^2+(1-\rho(n)^2)\beta_2(n)^2)\] 
for some $\gamma >1$.

If however the covariates are orthogonal then (\ref{equ:asymp_3})
simplifies and becomes 
\begin{equation} \label{equ:asymp_11}
\beta_{\nu}(n)^2 > \left(\sum_{j\in  {\mathcal
    S}_2(n)}\beta_j(n)^2+\sigma^2\right)\frac{\tau \log q(n)}{n}.
\end{equation}

 Theorem 1 of
\cite{LOKTAYTIB214} gives a consistency result for orthogonal
covariates with $L_2$ norm one, namely (in the present notation) 
\begin{equation} \label{equ:lasso_1}
\min_{1 \le j \le k(n)} \vert\beta_j\vert -\sigma \sqrt{2 \log
  q(n)} \rightarrow \infty.
\end{equation}
If more care is taken (\ref{equ:asymp_11}) can be expressed in the
same manner with $\tau=2$. At first glance (\ref{equ:asymp_11}) differs from
(\ref{equ:lasso_1}) by the inclusion of $\,\sum_{j\in  {\mathcal
    S}_2(n)}\beta_j(n)^2$. However lasso requires a value for
$\sigma^2$ which causes problems , particularly if $q(n)>k(n)$ (see
\cite{LOKTAYTIB214}). The term $\sum_{j\in  {\mathcal
    S}_2(n)}\beta_j(n)^2+\sigma^2$ is nothing more 
than the approximation for  $ss_0(n)/n$ and can therefore be interpreted,
if one wishes, as an estimate for $\sigma^2$ based on the sum of
squared errors of the covariates which are active at this stage.

\subsection{False discovery rate}
In this section we suppose that the data are generated as under
(\ref{equ:standard_model}) 
with ${\mathbold X}(n)=({\mathbold x}_1(n),\ldots {\mathbold
  x}_{k(n)}(n))$ given and the errors ${\mathbold \varepsilon}(n)$ are
Gaussian white noise. We suppose further that there are $q(n)$ additional
covariates ${\mathbold x}_j(n),j=k(n)+1,\ldots , k(n)+q(n)$ which are
Gaussian white noise independent of the  ${\mathbold x}_j,j=1,\ldots ,
k(n)$ and the ${\mathbold  \varepsilon}(n)$. A false discovery is the
inclusion of a variable ${\mathbold x}_j, k(n)+1\le j\le k(n)+q(n)$ in
the final active set. We denote the number of false discoveries by
$\nu$. The following theorem holds.

\begin{theorem}
Let $F_0$ be the event that none of the $k_n$ covariates ${\mathbold x}_1(n),\ldots {\mathbold
  x}_{k(n)}(n))$ are included and $F_1$ the event they are all and
moreover the first to be included. Then 
 \[\alpha \le \ex(\nu|F_i)\le\alpha/(1-\alpha), i=0,1.\]
\end{theorem}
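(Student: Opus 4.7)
The construction of the $p$-value (\ref{equ:pval_1_exact}) together with (\ref{equ:beta_p}) guarantees that whenever the currently outstanding covariates are genuinely i.i.d.\ standard Gaussian, $p$ is $\text{Unif}(0,1)$, and the discussion around (\ref{equ:pval_2}) makes the successive $p$-values along the stepwise path independent. My plan is to reduce each of the two bounds to a truncated geometric waiting-time calculation for a sequence of noise $p$-values.

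For $F_1$: once the first $k(n)$ steps have brought in the true covariates, the residual vector is Gaussian and independent of the $q(n)$ noise covariates $\mathbold{x}_{k(n)+1},\ldots,\mathbold{x}_{k(n)+q(n)}$. The next stepwise $p$-values $p'_1,p'_2,\ldots$ are therefore i.i.d.\ $\text{Unif}(0,1)$, so that $\nu=(\inf\{i\ge 1:p'_i>\alpha\}-1)\wedge q(n)$ and
\[
\ex(\nu\mid F_1)=\sum_{k=1}^{q(n)}\pr(\nu\ge k\mid F_1)=\sum_{k=1}^{q(n)}\alpha^k,
\]
which lies in $[\alpha,\alpha/(1-\alpha)]$, giving both bounds simultaneously.

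For $F_0$: no true covariate is ever included, so the procedure only picks from the noise. The subtlety is that $F_0$ couples the whole trajectory of true and noise covariates, so one has to argue that the step-wise $p$-values remain stochastically at least $\text{Unif}(0,1)$ under this conditioning. Two conservative features do the job: the exponent $k(n)+q(n)-\nu_0$ in (\ref{equ:pval_1_exact}) overstates the number of candidates that are actually noise, so the multiplicity correction is too strong; and conditional on $F_0$ the best noise reduction must beat the $k(n)$ remaining true reductions, which only makes it stochastically smaller. Either feature alone yields $\pr(p_{\text{next}}>\alpha\mid \mathcal{F}_i,F_0)\ge 1-\alpha$, and the same geometric-series calculation gives $\ex(\nu\mid F_0)\le\alpha/(1-\alpha)$. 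The lower bound $\ex(\nu\mid F_0)\ge\alpha$ follows from $\ex(\nu\mid F_0)\ge\pr(\nu\ge 1\mid F_0)$ together with the uniform-calibration inequality applied to the first step.

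The main obstacle is the $F_0$ case: extracting a clean increment inequality from the global conditioning on $F_0$. I would exploit the independence of $\mathbold{x}_{k(n)+1},\ldots,\mathbold{x}_{k(n)+q(n)}$ from both $\mathbold{X}(n)$ and $\mathbold{\varepsilon}(n)$, together with the explicit Beta form of the $p$-value, to marginalise over the trajectories of the true covariates that never enter the active set, reducing the analysis at each step to that of a max over independent Gaussian vectors for which the calibration in (\ref{equ:pval_1_exact}) is exact.
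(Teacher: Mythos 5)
Your proposal follows essentially the same route as the paper, and at the same level of rigor: both arguments come down to the claim that, given the history, each further step admits a noise covariate with probability roughly $\alpha$, followed by summing the geometric series $\sum_{j\ge 1}\alpha^j=\alpha/(1-\alpha)$. The bookkeeping differs: the paper works covariate by covariate, bounding the probability that a specific ordered $j$-tuple of noise covariates is the included set by $(\alpha/q(n))^j$ (each individual covariate crosses the Bonferroni-type threshold $1-(1-\alpha)^{1/(q(n)-\nu_0)}\approx\alpha/q(n)$) and multiplying by the $q(n)(q(n)-1)\cdots(q(n)-j+1)\le q(n)^j$ choices of tuple, whereas you work step by step with the aggregated $p$-value of the minimum, which is uniform when the outstanding candidates are genuine noise; these are the same computation in different clothes. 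Two substantive differences are worth recording. First, you address the lower bound $\ex(\nu|F_i)\ge\alpha$ via $\ex(\nu|F_i)\ge\pr(\nu\ge 1|F_i)$, which the paper's proof omits entirely; that is a genuine improvement in coverage of the statement, though it is also where the real difficulty lies. Second, your assertion that the post-signal $p$-values are \emph{exactly} i.i.d.\ uniform given $F_1$ is too strong: $F_1$ conditions the noise covariates on having lost the first $k(n)$ competitions, which tilts their law. You correctly flag the analogous issue for $F_0$ and note that the tilt, together with the overcounting exponent $k(n)+q(n)-\nu_0\ge q(n)$, acts in the conservative direction, so the upper bound survives; but the same tilt works \emph{against} the lower bound, so $\ex(\nu|F_i)\ge\alpha$ is not actually secured by your sketch --- nor by the paper, whose proof is itself only approximate (``approximately $\alpha/q(n)$''). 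In short: same skeleton, slightly more careful in places, with the one overclaim (exact uniformity under the conditioning) occurring precisely where the paper is silent.
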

\begin{proof}Consider $F_0$ and suppose that $j$ of the random
  covariates are active and consider the probability that these are
  indexed by $i_1,\ldots,i_j$ in the
  order of inclusion. At each stage the residuals are independent of
  the covariates not yet included due to the independence of the
  random covariates and the errors $\varepsilon$. The probability that $i_{\ell}$ is included is
  the probability that the threshold is not exceeded which is
  approximately $\alpha/q(n)$ (see the Appendix). Thus the
  probability that the included covariates are indexed by
  $i_1,\ldots,i_j$ is approximately $(\alpha/q(n))^j$. There are
  $q(n)(q(n)-1)\ldots (q(n)-j+1)$ choices for the covariates and hence
 \[\pr(\nu=j|F_0)\le q(n)^j(\alpha/q(n))^j=\alpha^j.\]
Summing over $j$ gives
\[\ex(\nu|F_0) \le \alpha/(1-\alpha).\]
The same argument works for $F_1$. \qed
\end{proof}
Sufficient conditions for the $\pr(F_i)$ to be large can be given. For
example for $F_1$ suppose the fixed covariates  ${\mathbold x}_1(n),\ldots {\mathbold
  x}_{k(n)}(n))$ satisfy
\begin{equation} \label{equ:asymp_p_F_1}
\left(1-\frac{ss_{01}}{ss_0}\right) >((2+\delta)\log
q(n)-\log\log(q(n))-2\log(-\log(1-\alpha)))/n.\nonumber
\end{equation}
for some $\delta >0$.  Then the proof of (\ref{equ:asymp_p}) given in
the appendix shows that the probability that (\ref{equ:asymp_p_F_1})
holds for one the random covariates is asymptotically $\alpha
q(n)^{-\delta/2}$. From this it follows $\pr(F_1)\ge 1-\alpha k(n)
q(n)^{-\delta/2}$ for large $n$.

Simulations suggest that the result holds under more general
conditions. As an example we use a a simulation scheme considered in
\cite{CANDetal17}.  The covariates $X_i$ are 
Gaussian but given by  an AR(1) process with coefficient 0.5. The
dependent variable is given by the logit model
\[Y=\text{rbinom}\left(1,\frac{\exp(0.08\sum_{j=2}^{21}X_j)}
  {1+\exp(0.08\sum_{j=2}^{21}X_j)}\right).\] 
It is clear that the conditions imposed above do not hold. The
covariates are not independent and $\pr(F_1)\approx 0$ rather than
approximately 1. We consider four different sample sizes
$(n,q(n))=(500,200)$, $(n,q(n))=(5000,2000)$ and the two cases with the
values of $n$ and $q(n)$ reversed. The results for least squares using
the logit model are given in Table~ \ref{tab:fdr} for $\alpha=0.01,
0.05, 0.10$. The first line for the sample sizes gives the average
number of false discoveries, the second line gives the average number
of correct discoveries. The results are based on 1000 simulations. It
is seen that the average number of false discoveries is indeed well
described by $\alpha$. The results for least squares based and the
logit model combined with Kullback-Leibler are essentially the same.
\begin{table}
\begin{center}
{\footnotesize
\begin{tabular}{llccc}
$n$&$p$&$\alpha=0.01$&$\alpha=0.05$&$\alpha=0.1$\\
\hline
500&200& 0.012&   0.051&0.103\\
&&0.594&  1.063 &1.343\\
5000&2000&0.010&0.059&0.106\\
&&6.420&7.015&7.196\\
200&500&0.010&0.043&0.102\\
&&0.029&0.130&0.179\\
2000&5000&0.006&0.058&0.112\\
&&2.92&3.477&3.701\\
\hline
\end{tabular}
\caption{False discovery rate.\label{tab:fdr}}
}
\end{center}
\end{table}

\section{Simulations and real data}
\subsection{The ProGau, ProPre1 and ProPre2 procedures}
The procedure described in this paper with $\alpha=0.01$ will
be denoted by ProGau. Two modifications, ProPre1 and ProPre2, are also 
considered both of which are intended to ameliorate the effects of high
correlation between the covariates. They are based on \cite{JIAROH15}
and are as follows. Let
\[{\mathbold x}(n)= {\mathbold u}(n){\mathbold d}(n){\mathbold v}(n)^T\]
be the singular value decomposition of the covariates ${\mathbold
  x}(n)$. Put 
\[{\mathbold f}(n)={\mathbold u}(n){\mathbold d}(n)^{-1}{\mathbold u}(n)^T\]
where ${\mathbold d}(n)^{-1}$ is the
$\min(n,q(n))$ diagonal matrix consisting of the reciprocals of
the non-zero values of ${\mathbold d}(n)$. Then ${\mathbold y}(n)$ is replaced by
${\tilde {\mathbold y}}(n)={\mathbold f}(n){\mathbold y}(n)$ and ${\mathbold
  x}(n)$ by  ${\tilde {\mathbold x}}(n)={\mathbold f}(n){\mathbold x}(n)$. The
ProPre1 procedure is to apply the ProGau procedure to $({\tilde
  {\mathbold y}}(n),{\tilde {\mathbold x}}(n))$.

The third procedure to be denoted by ProPre2 is to first
precondition the data as above but then to set $\alpha=0.5$ in
(\ref{equ:stop_rule_1}). This gives a list of candidate covariates. A
simple linear regression is now performed using these 
covariates. Covariates with a large $p$-value in the linear
regression are excluded and the remaining covariates
accepted. A large $p$-value is defined as follows: given
$\alpha=0.01$ as in ProGau and ProPre1 the cut-off $p$-value in the
linear regression to be $1-(1-\alpha)^{1/(q(n)-\nu_0)}$ as in
(\ref{equ:stop_rule_1}). As $\alpha$ is small and $q(n)$ large with 
respect to $\nu_0$ this corresponds approximately to a $p$-values of
$\alpha/q(n)$. This is somewhat ad hoc it specifies a well defined
procedure which can be compared with other procedures.

At first sight it may appear that the use of
$p$-values deriving from a linear regression implies the acceptance of
the linear model (\ref{equ:standard_model}). This is not so. In
\cite{DAV16a} non-significance regions are defined based on random
perturbations of the covariates. The resulting regions are
asymptotically the same as the confidence regions derived from the
linear model (\ref{equ:standard_model}) but do not suppose it.

\subsection{Simulations}

{\bf Linear regression}\\

The first set of simulations we report are the equi-correlation
simulations described in Section~4.1 of \cite{DEGBUERITDEZ14}. The
sample size is $n=100$ and the number of covariates is $q(n)=500$. The
covariates are generated as Gaussian random variables with covariance
matrix $\Sigma$ where $\Sigma_{ij}=0.8, i\ne j$ and
$\Sigma_{jj}=1$. Four different scenarios are considered. The number
of non-zero coefficients in the data generate according to the linear
model is either $s_0=3$ or $s_0=15$ and they are all either
i.i.d. $U(0,2)$ or i.i.d. $U(0,4)$. The error term in the model is
i.i.d $N(0,1)$. 

Although  \cite{DEGBUERITDEZ14} is concerned primarily with confidence
interval Tables~5-8 give the results of simulations for testing
all the hypotheses $H_{0,j}:  \beta_j=0$. This is equivalent to
choosing those covariates ${\mathbold x}_j$ for which $H_{0j}$ is rejected. 
The measures of performance are the power and the family-wise error
rate FWER. The power is the proportion of correctly identified active
covariates. The FWER is the proportion of times the estimated set of
active covariates contains a covariate which is not active. In Tables
6 and 8 of  \cite{DEGBUERITDEZ14} the active covariates are chosen at
random.  As ProGau, ProPre1 and ProPre2 are equivariant with respect to
permutations of the covariates the active covariates are either
$S_0=\{1,2,3\}$ or $S_0=\{1,\ldots,15\}$ as in Tables 5 and 7 of
\cite{DEGBUERITDEZ14}.

The results given in Table~\ref{tab:power_fwer}. The results for
Lasso-Pro are taken from \cite{DEGBUERITDEZ14}. As is seen from
Table~\ref{tab:power_fwer} the best overall procedure is ProPre2. In the
case $s_0=3$ and $U(0,2)$ coefficients it is slightly worse that
Lasso-Proc but the difference could well be explicable by simulation
variation. For ProGau, ProPre1 and ProPre2 500 simulations were performed.
\begin{table}
\begin{center}
{\footnotesize
\begin{tabular}{lcccccccc}
\hline
&\multicolumn{4}{c}{$s_0=3$}&\multicolumn{4}{c}{$s_0=15$}\\
&\multicolumn{2}{c}{$U(0,2)$}&\multicolumn{2}{c}{$U(0.4)$}&\multicolumn{2}{c}{$U(0,2)$}&\multicolumn{2}{c}{$U(0,4)$}\\
Lasso-Pro&0.56&0.10&0.79&0.11&0.70&1.00&0.92&1.00\\
ProGau&0.60&0.19&0.79&0.07&0.25&0.97&0.46&0.96\\
ProPre1&0.41&0.01&0.71&0.01&0.09&0.01&0.29&0.00\\
ProPre2&0.55&0.12&0.79&0.07&0.44&0.26&0.73&0.09\\
\hline
\end{tabular}
\caption{Power and family-wise error rate for the lasso-projection
  method of \cite{DEGBUERITDEZ14} and the procedures ProGau, ProPre1
  and ProPre2 as described in this section. \label{tab:power_fwer}}
}
\end{center}
\end{table}

Tables~1-4 of \cite{DEGBUERITDEZ14} give the results of some
simulations on the covering frequencies and the lengths of
confidence intervals. Again we shall restrict the comparisons to the
four cases detailed above.
The average cover for the active set $S_0$ is defined by
\[\text{Avgcov}=\frac{1}{s_0}\sum_{j \in S_0}\pr(\beta_j^0 \in CI_j)\]
and the average length  by
\[\text{Avglength}=\frac{1}{s_0}\sum_{j \in S_0}\text{length}(CI_j)\]
with analogous definitions for the complement $S_0^c$. The $CI_j$ are
the confidence intervals for the coefficients $\beta_j^0$ used to
generate the data.

The $0.95$-confidence intervals for the covariates for the procedures
ProGau, ProPre1 and ProPre2 are calculated as follows. Let $\tilde{S}_0$
denote the estimated active set. For the covariates in $\tilde{S}_0$
the confidence intervals are those calculated from a simple linear
regression. For a covariate in $\tilde{S}_o^c$ the confidence interval
is calculates by appending this covariate to the set $\tilde{S}_0$,
performing a linear regression and using confidence interval from this
regression. It is pointed out again that these confidence
intervals are not dependent on the linear model
(\ref{equ:standard_model}).

The results are given in Table~\ref{tab:cov_len} with the Lasso-Pro
results taken from  \cite{DEGBUERITDEZ14}. For $s_0=3$ and $s_0=15$ the
standard deviations of the estimates of the $\beta_j$ are approximately
0.186 and 0.216 respectively. Thus for a coverage probability of
$\gamma$ the optimal lengths of the confidence intervals are
approximately $2\cdot 0.186\cdot\text{qnorm}((1+\gamma)/2)$ and
$2\cdot 0.216\cdot \text{qnorm}((1+\gamma)/2)$ respectively. For the Lasso-Pro
procedure with the stated coverage probabilities the lengths of the
confidence intervals for $S_0$ with $s_0=3$ are 78\% for $U(0,2)$ and
74\%  for $U(0,4)$. For $S_0^c$ and $s_0=3$ the corresponding
percentages are 10\% for $U(0,2)$ and 5\% for $U(0,4)$. For $S_0$ with
$s_0=15$ the percentages are 73\% for $U(0,2)$ and 85\% for
$U(0,4)$. For $S_0^c$ the intervals are approximately 25\% shorter
than the optimal intervals. The explanation would seem to be the
following (personal communication from Peter B\"uhlmann).  The
theorems of \cite{DEGBUERITDEZ14} require $s_0=o(\sqrt{n}/\log q(n))$  to
guarantee an asymptotically valid uniform approximation and prevent
super efficiency. On plugging in $n=100$ and $q(n)=500$ gives
$s_0=o(1.61)$ so that $s_0=15$ is, so to speak,  not in the range of
applicability of the theorems. The lengths for $S_0^c$ with $s_0=3$ are
only slightly longer than the optimal intervals so even here there may
be a super efficiency effect. 

We note that in this simulation ProPre1 is dominated by ProGau.
\begin{table}
\begin{center}
{\footnotesize
\begin{tabular}{llcccccccc}
\hline
&&\multicolumn{4}{c}{$s_0=3$}&\multicolumn{4}{c}{$s_0=15$}\\
&&\multicolumn{2}{c}{$U(0,2)$}&\multicolumn{2}{c}{$U(0.4)$}&\multicolumn{2}{c}{$U(0,2)$}&\multicolumn{2}{c}{$U(0,4)$}\\
Lasso-Pro&$S_0$&0.89&0.82&0.87&0.82&0.56&0.56&0.53&0.55\\
&$S_0^c$&0.95&0.80&0.96&0.80&0.93&0.57&0.93&0.56\\
ProGau&$S_0$&0.84&0.70&0.89&0.73&0.81&1.42&0.84&1.81\\
&$S_0^c$&0.91&0.76&0.93&0.78&0.78&1.43&0.84&1.82\\
ProPre1&$S_0$&0.52&0.55&0.79&0.72&0.03&1.65&0.24&2.86\\
&$S_0^c$&0.48&0.63&0.80&0.79&0.02&1.71&0.23&2.96\\
ProPre2&$S_0$&0.76&0.68&0.87&0.73&0.62&1.24&0.85&1.08\\
&$S_0^c$&0.77&0.75&0.89&0.78&0.60&1.26&0.87&1.09\\
\hline
\end{tabular}
\caption{Average covering frequency and average length of interval for
  the four procedures, for $S_0$ and $S_0^c$ and for the four
  simulation scenarios with specified covering probability 0.95.  The
  Lasso-Pro results are taken from  \cite{DEGBUERITDEZ14}. \label{tab:cov_len}} 
}
\end{center}
\end{table}
Simulation experiments are reported in  \cite{JIAROH15}. The ones to be
given here correspond to those of Figure~4 of \cite{JIAROH15}. The
sample size is $n=250$ with $q(n)=250, 5000, 10000, 15000$ and
30000. The covariates have covariance matrix $\Sigma$ with
$\Sigma_{ii}=1$ and $\Sigma_{ij}=\rho, i\ne j$ with $\rho=0.85$ and
$\rho=0.05$. The number of active covariates is $k=20$ with
coefficients  $\beta_1=\ldots=\beta_{20}=3$. The noise is
i.i.d. $N(0,1)$. The results are given in Table~\ref{tab:fn_fp}.

It is seen that ProPre2 gives the best results for $q(n)=5000,10000,
15000$ and 30000 but fails in terms of false negatives for
$q(n)=250$. More generally ProPre2 fails for $q(n)$ in the range
230-280. This seems to correspond to the peaks in the black `puffer'
lines of Figure~4 of \cite{JIAROH15}. An explanation of this
phenomenon is given in  \cite{JIAROH15}. For  $q(n)$ in the range
280-30000 ProPre2 outperforms ProGau and ProPre2 and also all the methods
in Figure~4 of \cite{JIAROH15}, particularly with respect to the
number of false positives.  
\begin{table}
\begin{center}
{\footnotesize
\begin{tabular}{lcccccccccc}
\hline
&\multicolumn{10}{c}{$\rho=0.85$}\\
&\multicolumn{2}{c}{250}&\multicolumn{2}{c}{5000}&\multicolumn{2}{c}{10000}&\multicolumn{2}{c}{15000}&\multicolumn{2}{c}{30000}\\
ProGau&0.00&1.55&9.12&7.87&15.1&9.32&17.3&10.0&18.1&9.91\\
ProPre1&20.0&0.00&15.7&0.01&17.7&0.00&17.2&0.00&18.6&0.01\\
ProPre2&19.3&0.44&0.00&0.02&0.00&0.07&0.47&0.16&0.85&1.08\\
&\multicolumn{10}{c}{$\rho=0.10$}\\
&\multicolumn{2}{c}{250}&\multicolumn{2}{c}{5000}&\multicolumn{2}{c}{10000}&\multicolumn{2}{c}{15000}&\multicolumn{2}{c}{30000}\\
ProGau&0.00&0.18&3.38&2.96&8.04&4.11&11.7&5.00&16.1&5.94\\
ProPre1&14.3&0.40&13.48&0.00&15.8&0.00&17.4&0.00&18.2&0.01\\
ProPre2&17.0&0.00&0.00&0.02&0.00&0.04&0.11&0.06&0.81&0.09\\
\hline

\end{tabular}
\caption{Average number of false negatives (first number) and average
  number of false positives (second number) for the procedures ProGau,
  ProPre1 and ProPre2 for different numbers of covariates and two values of
  $\rho$, 0.85 and 0.1. \label{tab:fn_fp}}  
}
\end{center}
\end{table}

{\bf Logistic regression}\\
Logistic regression was also considered in
\cite{DEGBUERITDEZ14}. Table~\ref{tab:log_kl} gives the results of some
simulations as described in \cite{DEGBUERITDEZ14}: it includes Table~9
of \cite{DEGBUERITDEZ14}. The sample size is $n=100$ and the number of
covariates is $q(n)=500$. The covariates ${\mathbold x}(n)$  are
generated as Gaussian random variables with mean zero and covariance
matrix $\Sigma$ with $\Sigma_{ij}=0.8^{\vert i-j\vert}$ for $i \ne j$ and
$\Sigma_{ii}=1$. The parameter ${\mathbold \beta}$ is given by
$\beta_i=U(0,\eta), i=1,2,3$ with $\eta=1,2,4$ and $\beta_i=0, i\ge
4$. The dependent variables $y_i$ are generated as a independent
binomial random variables $\text{binom}(p_i,1)$  with $p_i=\exp({\mathbold x}^t{\mathbold
  \beta}_i)/(1+\exp({\mathbold x}^t{\mathbold \beta}_i)$ (see
\cite{DEGBUERITDEZ14}). The KL-procedure of the table is as described
in Section~2.4 . The cut-off $p$-value is $0.01$ so that all covariates
are included up to but excluding the first with a $p$-value exceeding 0.01.

\begin{table}[h]
\begin{center}
\begin{tabular}{lcccc}
\multicolumn{5}{c}{Logistic regression}\\
\hline
\multicolumn{2}{c}{ }&\multicolumn{3}{c}{Toeplitz}\\
\cline{3-5}
{\bf measure}&{\bf method}&$U(0,1)$&$U(0,2)$&$U(0,4)$\\ 
\hline
Power&Lasso-ProG&0.06&0.27&0.50\\
&MS-Split&0.07&0.37&0.08\\
&KL-procedure&0.26&0.32&0.37\\
FWER&Lasso-ProG&0.03&0.08&0.23\\
&MS-Split&0.01&0.00&0.00\\
&KL-procedure&0.03&0.02&0.01\\
\hline
\end{tabular}
\caption{This incorporates Table~9 of \cite{DEGBUERITDEZ14} and the
procedure described in the text based on the Kullback-Leibler
discrepancy\label{tab:log_kl}}
\end{center}
\end{table}
\subsection{Real data}
The real data includes three of the data sets used in
\cite{DETBUH03}, namely Leukemia (\cite{GOLETAL99},
http://www-genome.wi.mit.edu/cancer/.), Colon (\cite{ALOETAL99},
http://microarray.princeton. edu/oncology/.)  and
Lymphoma (\cite{ALIETAL00}, http://llmpp.nih.gov/
lymphoma/data/figure1). A fourth data set SRBCT is included. All data
sets were downloaded from\\
\url{http://stat.ethz.ch/~dettling/bagboost.html}\\
\quad\\
A fifth data set, prostate cancer {\it prostate.rda} was downloaded
from the {\it   lasso2} package available from the CRAN R package
repository.

The dependent variable ${\mathbold y}(n)$ is integer valued in all
cases, each integer denoting a particular form, or absence, of
cancer.\\
\quad\\
{\bf Least squares}\\
The data set Leukemia was analysed in Section~\ref{sec:2.1} using
ProGau with the $p$-values of the first five genes given
(\ref{equ:leukemia_lsq_proc1}). Classification of the form of cancer
based on these three genes results in one misclassification. 
If this is repeated using ProPre1 the result
is
\begin{equation} \label{equ:leukemia_lsq_proc2}
\begin{tabular}{cccccc}
gene number&979&2727&1356&2049&3054\\
$p$-value&0.0072&0.1324&0.7365&0.3646&0.7324\\
\end{tabular}
\end{equation}
Classifying the cancer using gene 979 only results in 15
misclassifications. The procedure ProPre2 also results in the single
gene 979 and 15 misclassifications. The results for the remaining
real data sets given above are equally poor so the procedures ProPre1
and ProPre2 will not be considered any further.



Table~\ref{tab:gene_expres} gives the results for ProGau for all five
data sets. The second row gives the sample size, the number of
covariates followed by the number of different cancers.
The rows 4-8 give the covariates in order and their $p$-values. These
are included in order up to but excluding the first covariate with a
$p$-value exceeding 0.01. The row 9 gives the number of
misclassifications based on the included covariates. Row 10
gives the number of possibly relevant covariates calculates as
described in  Section~\ref{sec:2.1}.

In previous versions of this paper the number of misclassifications 
based on ProGau as described above was compared with the results given
in Table~1 of \cite{DETBUH03}. The comparison is illegitimate. The
results in \cite{DETBUH03} are calculated by cross validation: each
observation is in turn classified using the remaining $n-1$
observations and Table~1 of \cite{DETBUH03} reports the number of
misclassification using different procedures. Line 11 gives the
results for the following procedure based on ProGau. 

The $i$th observation is eliminated leaving $n-1$
observations. Each of these is eliminated in turn and the ProGau
procedure applied to the remaining $n-2$ observations. This gives $n-1$
sets of active covariates. The ten most frequent covariates are
then used to classify the $i$th observation. This is done for $i,1\le
i\le n$. The sample is augmented by all misclassified
observations and the procedure applied to the new sample. This is
a form of boosting.  This is done 60 times or until all observations
are  correctly classified which ever happens first. The numbers in row 11
give the number of misclassifications followed by the number of
additional observations. 

\begin{table}[h]
\begin{center}
{\footnotesize
\begin{tabular}{cccccccccc}
\multicolumn{10}{c}{Linear regression: least squares}\\
\multicolumn{10}{c}{ }\\
\hline
\multicolumn{2}{c}{Leukemia}&\multicolumn{2}{c}{Colon}&\multicolumn{2}{c}{Lymphoma}&\multicolumn{2}{c}{SRBCT}&\multicolumn{2}{c}{Prostate}\\
\multicolumn{2}{c}{72, 3571, 2}&\multicolumn{2}{c}{62,
  2000, 2}&\multicolumn{2}{c}{62, 4026, 3}&\multicolumn{2}{c}{63,
  2308, 4}&\multicolumn{2}{c}{102, 6033, 2}\\
cov.&$P$-val.&cov.&$P$-val.&cov.&$P$-val.&cov.&$P$-val.&cov.&$P$-val.\\
\hline
1182& 0.0000&493 &7.40e-8& 2805& 0.0000&1389&2.32e-9&2619&0.0000\\
1219& 8.57e-4&175 &4.31e-1& 3727& 8.50e-9&1932&2.02e-7&203&5.75e-1\\
2888& 3.58e-3&1909& 5.16e-1& 632& 1.17e-4&1884&9.76e-5&1735&4.96e-1\\
1946& 2.54e-1&582 &3.33e-1& 714& 2.30e-2&1020&1.70e-1&5016&9.81e-1\\
2102& 1.48e-1&1772& 9.995e-1&2036& 2.94e-1& 246&6.75e-2&2940&9.52e-1\\
\hline
\multicolumn{2}{c}{1}&\multicolumn{2}{c}{9}&\multicolumn{2}{c}{1}&\multicolumn{2}{c}{13}&\multicolumn{2}{c}{8}\\
\multicolumn{2}{c}{281}&\multicolumn{2}{c}{45}&\multicolumn{2}{c}{1289}&\multicolumn{2}{c}{115}&\multicolumn{2}{c}{185}\\
\hline
\hline
\multicolumn{2}{c}{0, 10}&\multicolumn{2}{c}{0,
  59}&\multicolumn{2}{c}{0, 7}&\multicolumn{2}{c}{0,
  75}&\multicolumn{2}{c}{1,42}\\
\hline
\end{tabular}
\caption{Gene expression data: sample size, number of genes and number
  of cancers in the
  second row. The rows 4-8 give the first five covariates in order of
  inclusion and their $p$-values. Row 9 gives the number of
  misclassifications based on the active covariates. Row 10 gives
  the number of possibly relevant covariates. Row 11 gives the number
  of misclassifications using cross validation and the number of
  repeated observations. \label{tab:gene_expres}}
}
\end{center}
\end{table}
\quad\\
{\bf Robust regression}\\
Table~\ref{tab:rob_gene_expres} corresponds to
Table~\ref{tab:gene_expres}. In all cases the first gene is the
same. For the lymphoma and SRBCT data sets the first five genes are
the same and in the same order. Nevertheless there are differences
which may be of importance.

One simple but, because of the nature of dependent variable,
somewhat artificial example which demonstrates this it to change the
first ${\mathbold y}(n)$ value for the colon data from 0 to -10. The
first gene using least squares is now 1826 with a $p$-value of 0.352. This
results not surprisingly in 23 misclassifications of the remaining 61
tissues. For the robust regression the first gene is again 493 with a
$p$-value of 1.40e-5 which results in nine misclassifications as
before.

In the real data examples considered here the dependent variable
denotes a form of cancer and is therefore unlikely to contain
outliers. Nevertheless the stepwise regression procedure can reveal
outliers or exotic observations. Given residuals $r_i, i=1,\ldots,n$
a measure for the outlyingness of $r_i$ is $\vert
r_i\vert/s_n$ where $s_n=\text{median}(\vert r_1\vert,\ldots,\vert
r_n\vert)$. Hampel's 5.2 rule (\cite{HAM85}) is
 to identify all observations whose outlyingness value exceeds 5.2 as
 outliers. For the leukemia data based on the first three
 covariates and using least squares the observations numbered 21, 32
 and 35 were identified as  outliers with outlyingness values 5.53,
 9.78 and 5.96 respectively. The robust regression for the same data
 resulted in four outliers, 32, 33, 35 and 38 with values 18.0, 6.46,
 18.42 and 7.50 respectively. The observations 32 and 35 are
 also identified by least squares but the outlyingness values for  the
 robust regression are much larger .
\begin{table}[ht]
\begin{center}
{\footnotesize
\begin{tabular}{cccccccccc}
\multicolumn{10}{c}{Robust linear regression}\\
\multicolumn{10}{c}{ }\\
\hline\\
\multicolumn{2}{c}{Leukemia}&\multicolumn{2}{c}{Colon}&\multicolumn{2}{c}{Lymphoma}&\multicolumn{2}{c}{SRBCT}&\multicolumn{2}{c}{Prostate}\\
\multicolumn{2}{c}{72, 3571}&\multicolumn{2}{c}{62,
  2000}&\multicolumn{2}{c}{62, 4026}&\multicolumn{2}{c}{63,
  2308}&\multicolumn{2}{c}{102, 6033}\\
cov.&$P$-val.&cov.&$P$-val.&cov.&$P$-val.&cov.&$P$-val.&cov.&$P$-val.\\
\hline\\
1182& 1.11e-9&493 &2.76e-5& 2805& 1.11e-10&1389&1.65e-5&2619& 1.34e-12\\
2888& 7.41e-5&449&5.04e-1&3727& 8.82e-6&1932& 4.65e-6&1839& 1.47e-1\\
1758& 2.35e-5&1935& 1.26e-1&632&1.39e-3&1884& 7.15e-4&2260& 7.70e-3\\
3539& 4.89e-2&175&8.95e-1&714&4.04e-2&1020& 9,85e-2&1903& 8.23e-1\\
3313& 2.86e-2&792&1.86e-1&2036& 3.71e-1& 246&3.99e-2&5903&7.65e-1\\
\hline\\
\multicolumn{2}{c}{1, 1.39\% (2)}&\multicolumn{2}{c}{9, 14.52\%
  (2)}&\multicolumn{2}{c}{1, 1.61\% (3)}&\multicolumn{2}{c}{13, 20.63\% (4)}&\multicolumn{2}{c}{8, 7.77\% (2)}\\
\hline\\
\end{tabular}
\caption{As for Table~\ref{tab:gene_expres} but for robust
  regression\label{tab:rob_gene_expres}} 
}
\end{center}
\end{table}
\begin{table}[h]
\begin{center}
{\footnotesize
\begin{tabular}{cccccc}
\multicolumn{6}{c}{Logistic regression: Kullback-Leibler}\\
\multicolumn{6}{c}{ }\\
\hline\\
\multicolumn{2}{c}{Leukemia}&\multicolumn{2}{c}{Colon}&\multicolumn{2}{c}{Prostate}\\
\multicolumn{2}{c}{72, 3571}&\multicolumn{2}{c}{62,
  2000}&\multicolumn{2}{c}{102, 6033}\\
cov.&$P$-val.&cov.&$P$-val.&cov.&$P$-val.\\
\hline\\
956& 0.0000&377&2.12e-7&2619& 0.0000\\
1356& 8.14e-2&356&5.35e-2&4180& 6.69e-1\\
264& 1.00e-0&695&1.0000&1949& 9.88e-1\\
\hline\\
\multicolumn{2}{c}{6, 8.33\%}&\multicolumn{2}{c}{10, 16.13\%
}&\multicolumn{2}{c}{9, 8.82\% }\\
\hline\\
\end{tabular}
\caption{Gene expression data corresponding to
  Table~\ref{tab:gene_expres} for for logistic regression based on the
  Kullback-Leibler discrepancy. \label{tab:gene_expres_log_kl}}
}
\end{center}
\end{table}

{\bf Logistic regression}\\
For the leukemia, colon and prostate data the dependent variate
${\mathbold y}$ takes on only the values zero and one. These data
sets can therefore can be analysed using logistic regression and the
Kullback-Leibler discrepancy (see Section~\ref{sec:kul_leib}).
Table~\ref{tab:gene_expres_log_kl}  gives the results corresponding to
those of Table~\ref{tab:gene_expres}.  

The second covariates for the leukemia and colon data exceed the
cut-off $p$-value of 0.05 but could nevertheless be relevant. If they
are included the number of misclassifications are zero and four
respectively.\\

{\bf The birthday data}\\
The data consist of the number of births on every day from 1st January
1969 to the 31st December 1988. The sample size is $n=7305$. The data
are available as `Birthdays' from the R-package `mosaicData'. They
have been analysed in \cite{GELetal13}.

In a first step a trend was calculated using a polynomial of order
7 by means of a robust regression using Huber's
$\psi$-function with tuning constant $cnt=1$ (page 174 of
\cite{HUBRON09}). 

The trend was subtracted and the residuals were analysed by means of a
robust stepwise regression again using Huber's $\psi$-function with
tuning constant $cnt=1$. The covariates were the trigonometric
functions
\[xs_{j}(i)=\sin(\pi j i/n) \text{ and } xc_j(i)=\cos(\pi ji/n), \,
i,j=1,...,7305,\]
but  with the difference that the $xs$ and $xc$ were treated in pairs
$xs_j$ and $xc_j$.  The cut-off $p$-value was $p=0.01$. This resulted
in 54 pairs being included in the regression. The first five periods
in order of importance were 7 days, 3.5 days, one year, six months and
2.33 days ($1/3$ of a week). Figure~\ref{fig:birth_day_res} shows the
average size of the residual for each day of the year. The marked days
are 1st-3rd January, 14th February, 29th February, 1st April, Memorial
Day, Independence Day, 8th August (no explanation), Labor Day,
Veterans Day, Thanksgiving, Halloween, Christmas, and the 13th of each month.\\
\begin{figure}[hb]
\begin{center}
\includegraphics[width=10cm,height=15cm,angle=270]{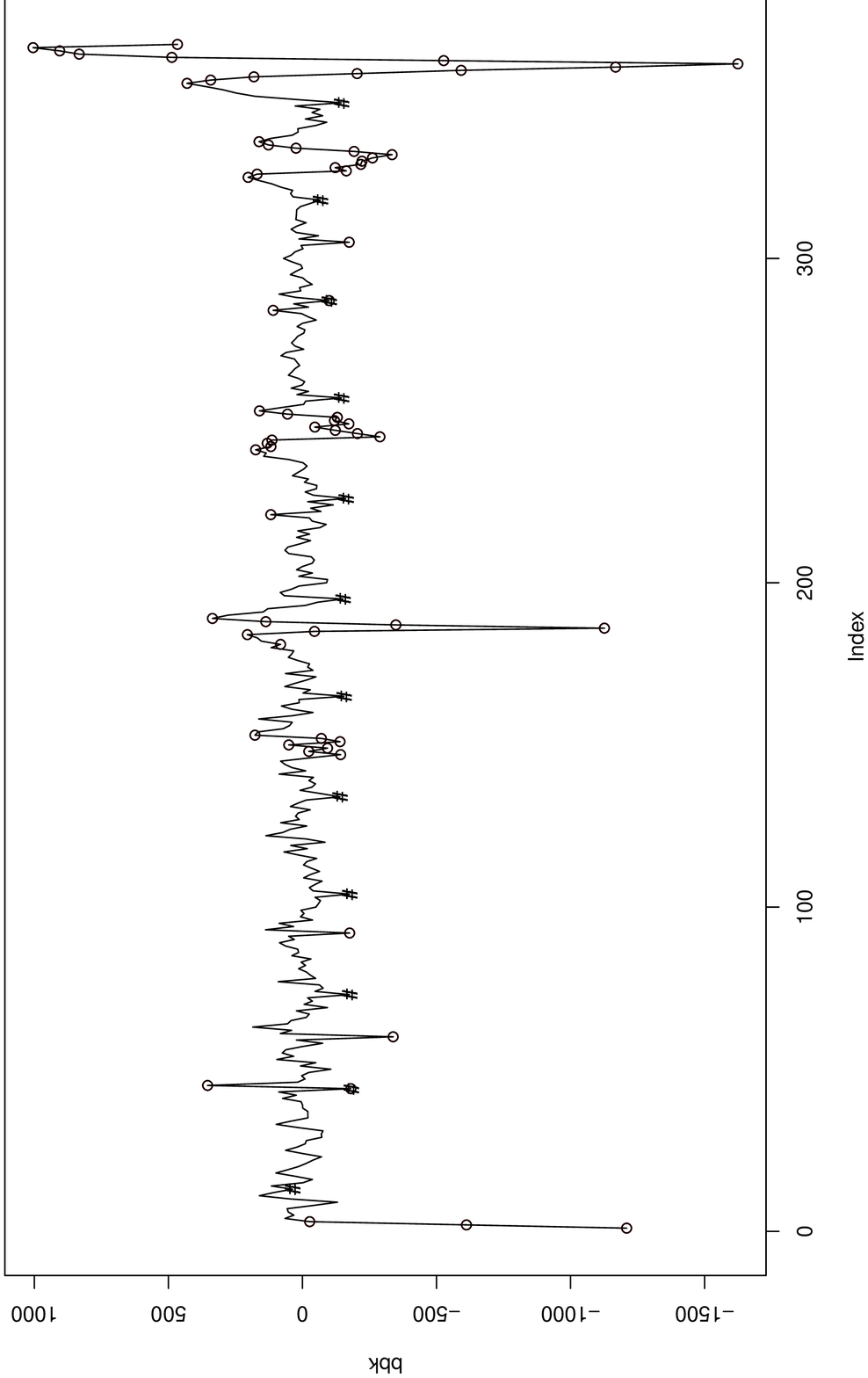}
\end{center}
\caption{The average size of the residuals for each day of the year
  for the birthday data. \label{fig:birth_day_res}}  
\end{figure}

{\bf Graphs}\\
Given variables ${\mathbold x}(n)$ a graphical representation of their
mutual dependencies can be constructed. The nodes of the graph
represent the variables and the edges indicate the existence of
dependencies between the joined nodes. A graph can be construct by
performing a stepwise regression of each variable on the others and
then joining that variable with those variables in the active subset
for that variable. In all there are $q(n)$ regressions and to take
this into account the cut-off $p$-value $\alpha$ is replaced by
$\alpha/q(n)$. For the colon data with $\alpha=0.05$ this results in a
graph with 1851 edges and a computing time of 45 seconds. For the
prostate data the graph has 7178 edges and the computing time was 15.5
minutes
.

In \cite{MEIBUE06} the authors give a lasso based method for choosing
a graph. A major problem with lasso is the choice of the regularizing
parameter $\lambda$ .  In the case of graphs however the authors show
that there is a criterion which allows a specification of
$\lambda$: if $\lambda$ is given as in (9) on page 1446
of their paper then $\alpha$ is an upper bound for the probability of
falsely joining two distinct connected components of the graph.

As an example they simulated a Gaussian sample with $n=600$ and $k=1000$ based
on a random graph with 1747 edges.  Unfortunately there is an error in
the description of the construction of the graph. The stated
probability $\varphi(d/\sqrt{p})$ (bottom line of page 1447) of joining two given
nodes is not consistent with the graph in Figure~1 of the paper (confirmed in a
personal communication by Nicolai Meinshausen). In the reconstruction
of the graph 1109 of the 1747 edges were correctly identified and there
were two false positives.  An attempt was made to construct a graph
similar to that of \cite{MEIBUE06}. The resulting graph had 1829 edges. In 10
simulations the average number of correctly identified edges was 1697
and on average there were 0.8 false positives. These results are
better than those based on the lasso. This is one more indication that
the stepwise procedure of \cite{DAV16c} is not only much simpler than
lasso it also gives better results.

\section{Acknowledgment}
The author thanks Lutz D\"umbgen for the exact $p$-value
(\ref{equ:pval_1_exact}) and the proof. He also thanks Peter
B\"uhlmann for help with \cite{DEGBUERITDEZ14}, Nicolai Meinshausen
for help with  \cite{MEIBUE06}, Joe Whittaker for the section on
graphs, Christian Hennig and Oliver Maclaren for helpful comments on
earlier versions.

\section{Appendix}
{\bf Proof of (\ref{equ:beta_p})}.\\
The result and the following proof are due to Lutz D\"umbgen. Given ${\mathbold y}(n)$
and $k$ linearly independent covariates ${\mathbold
x}_1(n),\ldots,{\mathbold x}_k(n)$ with $k<n-1$ consider the best linear
approximation of  ${\mathbold y}(n)$ by the ${\mathbold x}_j(n)$ in the
$\Vert\cdot\Vert_2$ norm:
\[ (\beta_1^*,\ldots,\beta_k^*)=\argmin_{\beta_1,\ldots,\beta_k}\Vert {\mathbold
  y}(n)-\sum_{j=1}^k{\mathbold x}_j(n)\beta_j\Vert_2.\]
The residual vector ${\mathbold r}_k(n)={\mathbold
  y}(n)-\sum_{j=1}^k{\mathbold x}_j(n)\beta_j^*$ lies in the orthogonal
complement $V_0$ of the space $X_0$ spanned by the ${\mathbold
  x}_1,\ldots,{\mathbold x}_k$. The space $V_0$ is of dimension
$n-k>1$. Let ${\mathbold Z}(n)\notin X_0$ be an additional vector and
consider the best linear approximation to ${\mathbold y}(n)$ based on
${\mathbold x}_1,\ldots,{\mathbold x}_n,{\mathbold Z}(n)$. This is
equivalent to considering the  best linear approximation to ${\mathbold y}(n)$ based on
${\mathbold x}_1,\ldots,{\mathbold x}_n,Q_0({\mathbold Z}(n))$ where
$Q_0$ is the orthogonal projection onto $V_0$. The sum of squared
residuals of this latter 
\[SS_1=ss_0-\frac{({\mathbold r}(n)^tQ_0{\mathbold Z}(n))^2}{\Vert
  Q_0{\mathbold Z}(n)\Vert_2^2}\]
where $ss_0=\Vert {\mathbold r}(n)\Vert_2^2$. If ${\mathbold Z}(n)$ is
standard Gaussian white noise then $Q_0{\mathbold Z}(n)$ is standard
Gaussian white noise in $V_0$. Let ${\mathbold v}_1(n),\ldots,{\mathbold
  v}_{n-\nu_0}(n)$ be an orthonormal basis of $V_0$ such that ${\mathbold
    r}(n)=\sqrt{ss_0}{\mathbold v}_1(n)$. Then
\[SS_1=ss_0\left( 1-\frac{({\mathbold v}_1(n)^tQ_0{\mathbold
      Z}(n))^2}{\sum_{j=1}^{n-\nu_0}({\mathbold v}_j(n)^tQ_0{\mathbold
      Z}(n))^2}\right)\]
which has the same distribution as
\[ss_0\left( 1-\frac{Z_1^2}{\sum_{j=1}^{n-\nu_0}Z_j^2}\right)\]
where the $Z_j$ are i.i.d $N(0,1)$. This proves (\ref{equ:beta_p}) on
noting that the $Z_j^2$ are independent $\chi_1^2$ random variables so
that
\[\frac{Z_1^2}{\sum_{j=1}^{n-\nu_0}Z_j^2}\stackrel{D}{=}\frac{\chi_1^2}{\chi_1^2+
  \chi_{n-\nu_0-1}^2}\stackrel{D}{=}B_{1/2,(n-\nu_0-1)/2}.\]
\quad\\
{\bf Proof of (\ref{equ:asymp_p})}.\\
Let the required $p$-value for stopping be $\alpha$. Then solving
(\ref{equ:pval_1_exact}) for small $\nu_0$ leads to 
\[\alpha=1-\text{pbeta}(x,1/2,(n-1)/2)^{q(n)}.\]
Now
\begin{eqnarray*}
\text{pbeta}(x,1/2,(n-1)/2)&=&1-\frac{\Gamma(n/2)}{\Gamma(1/2)\Gamma((n-1)/2)}\int_x^1u^{-1/2}(1-u)^{(n-1)/2}\,du\\
&\asymp&1-\frac{1}{\sqrt{2\pi}}\int_x^{\infty}u^{-1/2}\exp(-nu/2)\,du\\
&\asymp&1-\sqrt{\frac{2}{\pi}}\frac{1}{\sqrt{nx}}\exp(-nx/2)\\
&=&1-\exp(-0.5nx-0.5(\log nx)+0.5\log(2/\pi)).
\end{eqnarray*}
Thus
\begin{eqnarray*}
\text{pbeta}(x,1/2,(n-1)/2)^{q(n)}&\asymp&\left(1-\exp(-0.5nx-0.5(\log nx)+0.5\log(2/\pi))\right)^{q(n)}\\
&\asymp&\exp(-q(n)\exp(-0.5nx-0.5(\log nx)+0.5\log(2/\pi)))
\end{eqnarray*}
which leads to
\[\log(-\log(1-\alpha))\approx\log q(n) -0.5nx-0.5\log nx+0.5\log(2/\pi).\]
On putting
\[x=(2\log q(n)-\log\log(q(n))-2\log(-\log(1-\alpha))-\log\pi)/n\]
it is seen that
\[\log q(n) -0.5nx-0.5\log nx+0.5\log(2/\pi)=\log(-\log(1-\alpha))+o(1).\]

\bibliographystyle{apalike}
\bibliography{literature}

\begin{thebibliography}{}

\bibitem[Alizadeh et~al., 2000]{ALIETAL00}
Alizadeh, A., Eisen, M., Davis, R., Ma, C., Lossos, I., Rosenwald, A.,
  Boldrick, J., Sabet, H., Tran, T., Yu, X., Powell, J., Yang, L., Marti, G.,
  Moore, T., Hudson, J., Lu, L., Lewis, D., Tibshirani, R., Sherlock, G., Chan,
  W., Greiner, T., Weisenburger, D., Armitage, J., Warnke, R., Levy, R.,
  Wilson, W., Grever, M., Byrd, J., Botstein, D., Brown, P., and Staudt, L.
  (2000).
\newblock Distinct types of diffuse large {B}-cell lymphoma identified by gene
  expression profiling.
\newblock {\em Nature}, 403:503--511.

\bibitem[Alon et~al., 1999]{ALOETAL99}
Alon, U., Barkai, N., Notterman, D., Gish, K., Ybarra, S., Mack, D., and
  Levine, A. (1999).
\newblock Broad patterns of gene expression revealed by clustering analysis of
  tumor and normal colon tissues probed by oligonucleotide arrays.
\newblock {\em PNAS}, 96(12).

\bibitem[Candes et~al., 2017]{CANDetal17}
Candes, E., Fan, Y., Janson, L., and Lv, J. (2017).
\newblock Panning for gold: Model-free knockoffs for high-dimensional
  controlled variable selection.
\newblock arXiv:1610.02351v3 [math.ST].

\bibitem[Davies, 2014]{DAV14}
Davies, L. (2014).
\newblock {\em Data Analysis and Approximate Models}.
\newblock Monographs on Statistics and Applied Probability 133. CRC Press.

\bibitem[Davies, 2016a]{DAV16c}
Davies, L. (2016a).
\newblock Stepwise choice of covariates in high dimensional regression.
\newblock arXiv:1610.05131v2 [math.ST].

\bibitem[Davies, 2016b]{DAV16a}
Davies, P.~L. (2016b).
\newblock Functional choice and non-significance regions in regression.
\newblock arXiv:1605.01936 [math.ST].

\bibitem[Davies, 2016c]{DAV16b}
Davies, P.~L. (2016c).
\newblock On stepwise regression.
\newblock arXiv: 1605.04542[math.ST].

\bibitem[Dettling and B\"uhlmann, 2003]{DETBUH03}
Dettling, M. and B\"uhlmann, P. (2003).
\newblock Boosting for tumor classification with gene expression data.
\newblock {\em Bioinformatics}, 19(9):1061--1069.

\bibitem[Gelman et~al., 2013]{GELetal13}
Gelman, A., Carlin, J., Stern, H., Dunson, D., Vehtari, A., and Rubin, D.
  (2013).
\newblock {\em Bayesian Data Analysis}.
\newblock CRC Texts in Statistical Science. CRC Press, third edition.

\bibitem[Golub et~al., 1999]{GOLETAL99}
Golub, T., Slonim, D., P., T., Huard, C., Gaasenbeek, M., Mesirov, J., Coller,
  H., Loh, M., Downing, J., Caligiuri, M., Bloomfield, C., and Lander, E.
  (1999).
\newblock Molecular classification of cancer: class discovery and class
  prediction by gene expression monitoring.
\newblock {\em Science}, 286(15):531--537.

\bibitem[Hampel, 1985]{HAM85}
Hampel, F.~R. (1985).
\newblock The breakdown points of the mean combined with some rejection rules.
\newblock {\em Technometrics}, 27:95--107.

\bibitem[Huber and Ronchetti, 2009]{HUBRON09}
Huber, P.~J. and Ronchetti, E.~M. (2009).
\newblock {\em Robust Statistics}.
\newblock Wiley, New Jersey, second edition.

\bibitem[Jia and Rohe, 2015]{JIAROH15}
Jia, J. and Rohe, K. (2015).
\newblock Preconditioning the lasso for sign consistency.
\newblock {\em Electron. J. Statist.}, 9(1):1150--1172.

\bibitem[Lockhart et~al., 2014]{LOKTAYTIB214}
Lockhart, R., Taylor, J., Tibshirani, R.~J., and Tibshirani, R. (2014).
\newblock A significance test for the lasso.
\newblock {\em Ann. Statist.}, 42(2):413--468.

\bibitem[Meinshausen and B\"uhlmann, 2006]{MEIBUE06}
Meinshausen, N. and B\"uhlmann, P. (2006).
\newblock High-dimensional graphs and variable selection with the lasso.
\newblock {\em Annals of Statistics}, 34(3):1436--1462.

\bibitem[Press et~al., 2003]{PRTEVEFL03}
Press, W.~H., Teukolsky, S.~A., Vetterling, W.~T., and Flannery, B.~P. (2003).
\newblock {\em Numerical Recipes in Fortran 77: The Art of Scientific
  Computing}, volume~1.
\newblock Cambridge University Press, second edition.

\bibitem[van~de Geer et~al., 2014]{DEGBUERITDEZ14}
van~de Geer, S., B\"uhlmann, P., Ritov, Y., and Dezeure, R. (2014).
\newblock On asymptotically optimal confidence regions and tests for
  high-dimensional models.
\newblock {\em Annals of Statistics}, 42(3):1166--1202.

\end{thebibliography}
\end{document}